\tikzset{>=latex}
\tikzset{vertex/.style={circle, minimum size=2pt, fill=black}}
\tikzset{->-/.style={thick,decoration={markings, mark=at position 0.9 with {\arrow{>}}},postaction={decorate} } }
\tikzset{->n/.style={thick,decoration={markings, mark=at position #1 with {\arrow{>}}},postaction={decorate} } }
\tikzset{gate/.style={thick,decoration={markings, mark=at position 0.4 with { \node at (0,0) [rectangle,draw=black,fill=white,inner sep=2pt, minimum size=0]{\tiny{#1}}; }}, postaction={decorate} } }
\tikzset{col/.style={thick,decoration={markings, mark=at position 0.4 with { \node at (0,0) [circle,fill=#1,inner sep=2pt, minimum size=0]{}; }}, postaction={decorate} } }
\tikzset{col2/.style 2 args={thick,decoration={markings,
      mark=at position 0.35 with { \node at (0,0) [circle,fill=#1,inner sep=2pt, minimum size=0]{}; },
      mark=at position 0.65 with { \node at (0,0) [circle,fill=#2,inner sep=2pt, minimum size=0]{}; }}, postaction={decorate} } }
\newtheorem{thm}{Theorem}\crefname{thm}{theorem}{theorems}
\crefname{lem}{lemma}{lemmas}
\crefname{cor}{corollary}{corollaries}
\newcommand{\beq}{\begin{equation}}
\newcommand{\eeq}{\end{equation}}
\newcommand{\TT}{\mathbb T}
\renewcommand{\SS}{\mathbb S}
\newcommand{\calH}{\mathcal H}
\newcommand{\calV}{\mathcal V}
\newcommand{\calC}{\mathcal C}
\newcommand{\calM}{\mathcal M}
\newcommand{\calN}{\mathcal N}
\newcommand{\calL}{\mathcal L}
\newcommand{\calP}{\mathcal P}
\newcommand{\calT}{\mathcal T}
\newcommand{\ot}{\otimes}
\newcommand{\CADD}{\operatorname{C}_\text{ADD}}
\newcommand{\ssection}[1]{\smallskip\phantomsection\addcontentsline{toc}{section}{#1}\textit{#1.---}}
\begin{document}
\title{Entanglement from Topology in Chern-Simons Theory}
\author{Grant Salton}
\affiliation{Stanford Institute for Theoretical Physics, Stanford University, Stanford CA 94305}
\author{Brian Swingle}
\affiliation{Stanford Institute for Theoretical Physics, Stanford University, Stanford CA 94305}
\affiliation{Department of Physics, Harvard University, Cambridge MA 02138}
\affiliation{Martin Fisher School of Physics, Brandeis University, Waltham MA 02453}
\author{Michael Walter}
\affiliation{Stanford Institute for Theoretical Physics, Stanford University, Stanford CA 94305}
\begin{abstract}
The way in which geometry encodes entanglement is a topic of much recent interest in quantum many-body physics and the AdS/CFT duality.
This relation is particularly pronounced
in the case of topological quantum field theories, where topology alone determines the quantum states of the theory.
In this work, we study the set of quantum states that can be prepared by the Euclidean path integral in three-dimensional Chern-Simons theory.
Specifically, we consider arbitrary 3-manifolds with a fixed number of torus boundaries in both abelian $U(1)$ and non-abelian $SO(3)$ Chern-Simons theory.
For the abelian theory, we find that the states that can be prepared coincide precisely with the set of stabilizer states from quantum information theory.
This constrains the multipartite entanglement present in this theory, but it also reveals that stabilizer states can be described by topology.
In particular, we find an explicit expression for the entanglement entropy of a many-torus subsystem using only a single replica, as well as a concrete formula for the number of GHZ states that can be distilled from a tripartite state prepared through path integration.
For the nonabelian theory, we find a notion of ``state universality'', namely that any state can be prepared to an arbitrarily good approximation.
The manifolds we consider can also be viewed as toy models of multi-boundary wormholes in AdS/CFT.
\end{abstract}
\maketitle

Quantum information science concerns itself with the transmission and processing of quantum information, tasks that may be described as physical operations on quantum states.
Ultimately each such operation takes the form of unitary time-evolution generated a Hamiltonian, possibly depending on time, which describes the dynamics of all relevant degrees of freedom.
By contrast, much of quantum mechanics and quantum field theory involve the use of \emph{imaginary} or \emph{Euclidean} time evolution to define states of interest.
Imaginary time evolution describes evolution with a Hamiltonian $H$ for an imaginary time $t = - i \tau$, essentially evolution with the non-unitary operator $e^{- \tau H}$.
So far as we know, imaginary time evolution is merely a useful device for calculating properties of certain quantum states, e.g.\ a ground state or a thermal state of $H$.
All states realized in nature must ultimately be prepared from some initial state via real time evolution.
Nevertheless, given the ubiquity and importance of imaginary time evolution, and the more general notion of a \emph{Euclidean path integral}, it is desirable to better understand its expressive power.

Our interest in this work is the way Euclidean path integrals map geometries to states.
To more precisely define what we mean by a Euclidean path integral, we focus on the case of quantum field theories which can be coupled to arbitrary ``background geometries".
For a $D$-dimensional quantum field theory, this means there is a mapping $Z$, the path integral or functional integral, which takes a $D$-dimensional manifold $\calM$ (a spacetime) and returns a probability amplitude $Z(\calM)$.
When the manifold $\calM$ has no boundary, this amplitude is simply a complex number.
When the manifold $\calM$ has a boundary, then the boundary field configuration must be specified and $Z$ gives the amplitude for this field configuration.
In other words, the path integral specifies a state $\ket{\calM}$ in a Hilbert space $\HS_{\partial\calM}$ associated with boundary field configurations.

The simplest Euclidean path integral is defined by taking a space $\Sigma$ and an interval of length $\beta$ and considering the spacetime $\Sigma\times[0,\beta]$.
The corresponding path integral is nothing but the thermal density matrix, $e^{-\beta H_\Sigma}$, where $H_\Sigma$ is the Hamiltonian of the field theory on space $\Sigma$ (which need not be flat space).
If the boundary consists of multiple connected components, then the total Hilbert space is a tensor product $\HS_{\partial\calM}=\ot_{i=1}^n\HS_{\Sigma_i}$, where $\partial\calM=\cup_{i=1}^n\Sigma_i$, and the total Hamiltonian is $H_{\partial\calM}=\sum_{i=1}^n H_{\Sigma_i}$.
This means that the different components $\Sigma_i$ are not coupled and so the Euclidean path integral on $\partial\calM\times[0,\beta]$ factorizes into a tensor product of the thermal density matrices for each component.

In general, $\calM$ does not have the form of $\text{space} \times \text{time}$ and the meaning of the path integral is more complicated. In particular, entangled states in $\HS_{\partial\calM}$ can be prepared.
This is perhaps most distinctive in topological quantum field theories~\cite{witten1988topological}, where the short-range dynamics is trivial but the system still yields highly nontrivial states.
For the remainder of this paper, we imagine $\partial \calM$ is fixed.
Then every $D$-dimensional manifold $\calM$ with the right boundary defines a state in the Hilbert space $\HS_{\partial\calM}$.
The question we pose is the following:
\emph{What is the set of states in $\HS_{\partial\calM}$ that arise by varying $\calM$?}
This question makes sense for any quantum field theory, but in this work we will focus on topological quantum field theories~\footnote{One usually says that any state of a field theory can be prepared by Euclidean path integral provided one can also insert arbitrary operators into the path integral. In our analysis, we do not allow ourselves to insert arbitrary operators as we are specifically interested in the way the path integral maps topology to entanglement.}.
These theories provide the simplest setting in which to investigate the Euclidean path integral as a mapping between geometry and states, and we will be able to give a complete answer to the main question.

Informally, our results are as follows.
We study three-di\-men\-si\-o\-nal \emph{Chern-Simons theory}~\cite{witten89jones}, a topological quantum field theory which is closely related to the physics of the fractional quantum Hall effect~\cite{zhang1989effective}.
For simplicity, we specialize to the case where the boundary of $\calM$ is a union of $n$ tori, $\Sigma_i = \TT^2$.
The definition of Chern-Simons theory, which we recall below, depends on a choice of gauge group $G$ and of a level $k$.
We first study an abelian theory, $G = U(1)$, and show that the class of quantum states that can be prepared by the path integral coincides with the class of \emph{stabilizer states}~\cite{gottesman96stabilizer}, which play an important role in quantum information theory.
This result has two important consequences:
First, it allows us to derive a purely topological formula for the tripartite entanglement.
Previously, only bipartite entanglement entropies had been evaluated.
Second, it shows that stabilizer states can be naturally parameterized by topology, which is quite different from their well-known description in terms of stabilizer generators or discrete phase space.
We then study a non-abelian theory, $G = SO(3)$, for which we show that, by contrast, state preparation by the path integral for certain levels $k$ is \emph{universal}.
Specifically, we can prepare an arbitrarily good approximation to any state in the boundary Hilbert space by a suitable choice of manifold $\calM$.

Besides the intrinsic scientific interest in better understanding the physics of the Euclidean path integral, another motivation for our investigation comes from AdS/CFT duality, and in particular the ER=EPR conjecture~\cite{swingle2012entanglement,van2010building,maldacena2013cool}.
In that context there are so-called multi-boundary wormhole solutions~\cite{brill2000black,skenderis2011holography} which describe bulk gravitational geometries that connect disconnected boundary spacetime regions $\Sigma_i \times \text{time}$.
These bulk geometries describe states of otherwise decoupled conformal field theories living on $\Sigma_i \times \text{time}$, states whose multipartite entanglement properties have been of recent interest~\cite{hayden2013holographic,balasubramanian2014multiboundary,susskind2014erepr,bao2015holographic,marolf2015hot,susskind2016copenhagen,nezami2016multipartite}.
In some cases the relevant states can be understood as arising from a conformal field theory Euclidean path integral evaluated on a bulk spatial splice.
Some of the relevant conformal field theories involve sub-sectors that include the topological interactions we study, so our work can also be regarded as a small step towards a microscopic understanding of these multi-boundary wormholes. These motivations imagine the Chern-Simons theory as a toy model of the boundary field theory in AdS/CFT duality, but there are also connections between bulk gravity and Chern-Simons theory (typically of non-compact groups) that would be interesting to explore further from the perspective of our results~\cite{witten1988twoplusone,gukov2004chern,witten2007three}.

Our work can also be placed in the context of topological quantum computation~\cite{freedman2003topological,nayak2008nonabelian}, which is similarly rooted in topological quantum field theory.
We caution that the notion of universality used in quantum computation is a priori unrelated to the ``state universality'' discussed in this work.
The former concerns the capabilities of a computational model, while the latter concerns the \emph{physical} state space of the topological theory.
However, it has been observed that some schemes that are not universal for topological quantum computation using quasiparticle braiding alone can be made universal by incorporating certain topology-changing operations~\cite{freedman2006towards,freedman2005tilted,bonderson2013twisted,barkeshli2013twist,bonderson2016twisted,barkeshli2016modular}. Indeed, both notions of universality are closely related to properties of the mapping class group representation induced by the topological theory, and our universality result for the nonabelian theory relies on the density theorem of~\cite{larsen2005density}.

\ssection{Chern-Simons Theory}%
Let $G$ be a compact connected Lie group, the gauge group.
In three dimensions, the \emph{Chern-Simons action} for a closed, connected, oriented manifold $\calM$ is defined as
\begin{equation}\label{eq:cs action}
  S_\text{CS} = \frac k {4\pi} \int_\calM \tr\Bigl(A\wedge dA + \frac23 A\wedge A\wedge A\Bigr),
\end{equation}
where $A$ is a Lie algebra-valued connection 1-form, the gauge field;
the suitably normalized trace is taken in a defining representation of the Lie algebra of $G$,
and $k$ is a parameter~\footnote{Strictly speaking, we must treat the gauge field as a connection and sum over non-trivial bundles when the group $G$ is not simply connected. For example, the partition function for closed three-manifolds can be defined by extending the three-manifold to a four-manifold whose boundary is the three-manifold. The Chern-Simons action is then defined by integrating $\tilde{F} \wedge \tilde{F}$ over the four-manifold where $\tilde{F} = d\tilde{A} + \tilde{A} \wedge \tilde{A}$ and $\tilde{A}$ is an extension of $A$ to the four-manifold (e.g., \cite{witten89jones,guadagnini2014path}).
Alternatively, surgery~\eqref{eq:surgery} can be used to define Chern-Simons theory for general three-manifolds $\calM$ in terms of $\SS^3$
~\cite{reshetikhin1991invariants,turaev94quantum,guadagnini1993link}.}.
Under a gauge transformation $g\colon\calM\to G$, the gauge field transforms as $A\mapsto g^{-1}A g+g^{-1}dg$.
A natural class of gauge-invariant observables is given by \emph{Wilson loops}
\[ W(C,R) = \tr_R \calP \exp\Bigl( i \oint_C A \Bigr), \]
defined as the trace of the holonomy of the gauge field around an oriented closed curve $C$ in an irreducible representation $R$ of $G$; $\calP$ denotes path ordering along $C$.
More generally we define for any oriented link $L=C_1\cup\dots\cup C_n$ and representations $R_1,\dots,R_n$ a corresponding observable $W(L,R) = W(C_1,R_1) \cdots W(C_n,R_n)$.
When the group $G$ is nonabelian then the action $S_\text{CS}$ itself is \emph{not} invariant under gauge transformation, but instead changes by an amount proportional to $2\pi k$.
In order to obtain a gauge-invariant quantum field theory by the Feynman path integral, following Witten~\cite{witten89jones}, we need to require that $\exp(i S_\text{CS})$ be invariant; this constrains $k$ to be an integer.
We call $k$ the \emph{level} of the Chern-Simons theory with gauge group $G$~\footnote{Sometimes the level refers to the fully renormalized coupling constant (see~\cite{guadagnini1993link} for a comprehensive discussion).}.

Even on the heuristic level of the Feynman path integral, additional topological structure is required so that the theory is well-defined.
In particular, the expectation values of Wilson loop operators are only well-defined if we consider so-called \emph{framed links}, where the closed curves are thickened to solid tori, and modify the definition of $W(L,R)$ accordingly~\footnote{The manifold also needs to be endowed with a 2-framing. The standard choice is Atiyah's canonical 2-framing~\cite{atiyah1990framings}.}.
In this way, the expectation value
\begin{align}\label{eq:expectation value}
	\langle W(L,R) \rangle_{\calM} &= \frac {Z(\calM; L, R)} {Z(\calM)}
\end{align}
where
\[
  Z(\calM; L, R) =\! \int \mathcal DA \, e^{iS_\text{CS}} W(L,R), \;
  Z(\calM) =\! \int \mathcal DA \, e^{iS_\text{CS}}
\]
becomes a topological invariant of oriented framed links. 
This expression is purely formal due to the 
Feynman path integral; mathematically rigorous definitions 
have been proposed e.g.\ in~\cite{reshetikhin1991invariants,turaev94quantum,blanchet1995topological,manoliu1998abelian,manoliu1998abelian2}.
Note that for a fixed manifold $\calM$, $\langle W(L,R) \rangle_{\calM}$ is proportional to $Z(\calM; L, R)$; this will be a source of many proportionality signs in the following.

\begin{figure}
\centering
\includegraphics[width=0.6\columnwidth]{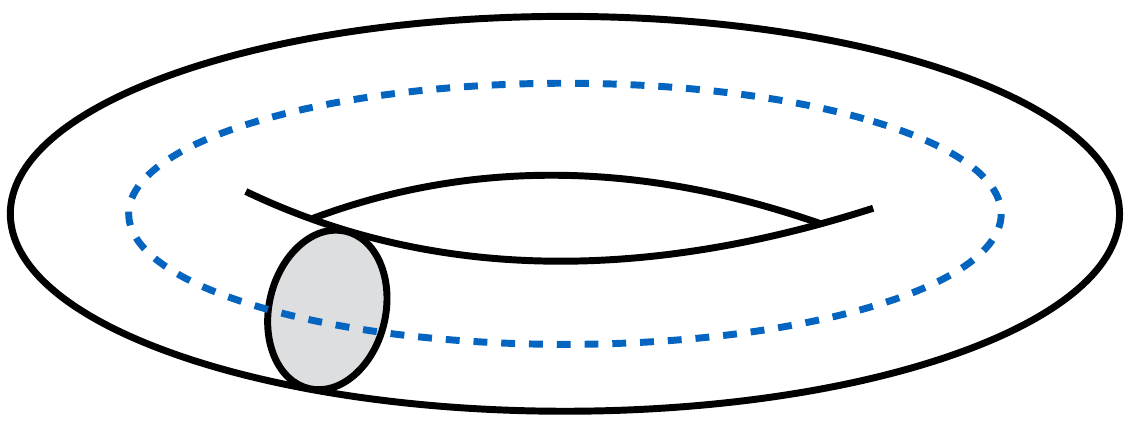}
\caption{\label{fig:Rjtorus}A solid torus with a Wilson loop inserted into its core.
Through the path integral, this object determines a quantum state.
The states $\ket j$ obtain from certain irreducible representations $R_j$ form an orthonormal basis for $\HS_{\TT^2}$.}
\end{figure}

When $\calM$ is a manifold with boundary then the path integral, possibly with Wilson loop operators inserted, should determine a quantum state $\ket\calM$ in a Hilbert space $\HS_{\partial\calM}$ spanned by inequivalent gauge field configurations on the boundary.
A distinguished role is played by the solid torus $\calT$, with boundary the two-dimensional standard torus $\TT^2$. 
It supports a nontrivial Wilson loop operator $W(C;R)$ along its non-contractible core, and the path integral with this operator inserted prepares a quantum state in the boundary Hilbert space (\cref{fig:Rjtorus}).
Not all states obtained in this way are distinct; instead, for each level $k$ there is a finite set of irreducible representations $R_j$ such that the corresponding states $\ket j$ form an orthonormal basis of $\calH_{\TT^2}$.
We note that changing the orientation amounts to passing to the dual Hilbert space, $\calH_{-\Sigma}\cong\calH_\Sigma^*$.

Now suppose that $\calM$ is a manifold with $n$ torus boundaries $\Sigma_j=\TT^2$. 
Let $\overline\calM$ denote the closed manifold obtained by gluing solid tori, with cores $C_1,\dots,C_n$.
Then the multipartite quantum state $\ket{\calM}$ in $\calH_{\TT^2}^{\ot n}$ prepared by the path integral can be computed as follows:
\begin{equation}\label{eq:M state}
	\braket{j_1,\dots,j_n}\calM = \langle W(C_1,R_{j_1}^*) \dots W(C_n,R_{j_n}^*)\rangle_{\overline\calM}
\end{equation}
If we are in addition given a framed link $L$ in $\calM$ labeled by representations $R$ then we denote the corresponding boundary state by $\ket{\calM;L,R}$; its amplitudes can be computed by inserting a further Wilson loop operator $W(L,R)$ into~\eqref{eq:M state}.

\smallskip

We are now in a position to restate our overarching question more precisely:
Fixing a gauge group $G$, a level $k$ and a number $n$, we are interested in the states $\ket{\calM;L,R}$ that can be obtained when we vary over all (not necessarily connected) 3-manifolds $\calM$ with $n$ torus boundaries, a well as over framed links $L$ in $\calM$ whose components are labeled by irreducible representations $R$.
Note that, in general, we will compute unnormalized states without worrying about overall normalizations.
We focus on torus boundaries for simplicity; our results extend easily to higher genus boundaries. With this choice our Hilbert space will be $\HS_{\TT^2}^{\ox n}$ -- a tensor product of $n$ Hilbert spaces, each of which is spanned by distinguished basis vectors $\ket j$, which will play the role of our computational basis.
It is instructive to observe that any state $\ket{\calM;L,R}$ can be obtained from a manifold without Wilson loops by projecting on a computational basis state (since we restrict to irreducible representations in $R$).

\begin{figure}[!t]
	\raisebox{.9cm}{(a)}\includegraphics[width=0.5\columnwidth,trim={220 0 0 0},clip]{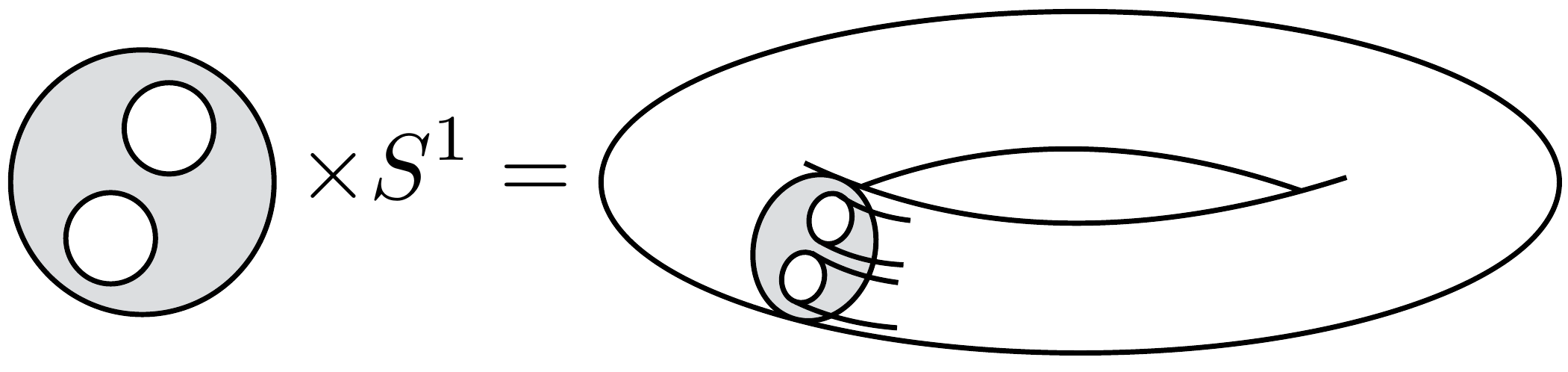}
	\quad
	\raisebox{.9cm}{(b)}
\begin{tikzpicture}[inner sep=0pt, outer sep=0pt]
  \node [label=left:$j_1$](leg1) at (-0.707, 0){};
  \node [label=right:$j_2$](leg2) at (0.707, 0){};
  \node [label=above:$j_3$](leg3) at (0,1.707){};
  \node[vertex, label={[xshift=0.2cm]$N$}](vertex1) at (0,0.7){};
  \draw[->n=0.6] (leg1) -- (vertex1);
  \draw[->n=0.6] (leg2) -- (vertex1);
  \draw[->n=0.6] (vertex1) -- (leg3);
\end{tikzpicture}
	\caption{\label{fig:fusetensor}
	(a) A disk with two punctures rotated around an $\SS^1$ forms a solid torus with two tori removed from the interior. This manifold $\calN$ has three toroidal boundaries.
	(b) Our notation for the corresponding fusion tensor $N^{j_3}_{j_1j_2}$ obtained by the path integral over $\calN$.}
\end{figure}

As a first example, let $\calN$ denote the manifold obtained by taking a solid torus and removing two smaller tori from the interior, as shown in \cref{fig:fusetensor},~(a).
This object has three boundary tori, two of which are negatively oriented, so that the path integral determines a tripartite state $\ket\calN$ in $\calH_{\TT^2}^* \ot \calH_{\TT^2}^* \ot \calH_{\TT^2}^*$, best thought of as a map from two copies to one copy of the Hilbert space.
We can compute its components using~\eqref{eq:M state} in terms of expectation values of Wilson loop operators in $\SS^2\times \SS^1$, the manifold obtained from $\calN$ by filling in the three boundary tori with cores $C_1,C_2,C_3$.
The result is
\begin{equation}
\label{eq:fusion}
	\langle W(C_1,R_{j_1}) W(C_2,R_{j_2}) W(C_3,R_{j_3}^*)\rangle_{\SS^2 \times \SS^1} \propto N_{j_1j_2}^{j_3},
\end{equation}
where $N_{j_1j_2}^{j_3}$ denotes the \emph{fusion tensor}, which we represent graphically as a three leg tensor, shown in \cref{fig:fusetensor},~(b).
It encodes the fusion rules of the Chern-Simons theory with gauge group $G$ and level $k$.
The appearance of the fusion rules here is not surprising, since the three Wilson loops wrapping the $\SS^1$ correspond to three punctures on $\SS^2$; but the $2$-sphere can only support the charges when the fusion rules are satisfied.
Since $N_{j_1j_2}^0 = \delta_{j_1,j_2^*}$, the manifold obtained by gluing a solid torus to the outside of $\calN$ will, after an orientation reversal, prepare a maximally entangled state $\sum_j \ket{j,j^*}$.

It is often quite complicated to compute the state~\eqref{eq:M state} directly, and it can be useful to cut the manifold $\calM$ along a closed two-dimensional surface $\Sigma$ to obtain simpler three-manifolds $\calM_1$ and $\calM_2$.
In other words, we construct the manifold $\calM$ by \emph{gluing} $\calM_1$ and $\calM_2$ along $\Sigma$. 
The path integral is compatible with gluing and so $\ket{\calM}$ can be obtained by contracting the state $\ket{\calM_1} \ot \ket{\calM_2} \in \HS_{\partial \calM} \ot \HS_\Sigma \ot \HS_\Sigma^*$ along the $\Sigma$-factors.
We will repeatedly use this strategy in the following.

In addition to Wilson loops and path integrals, we have other tools at our disposal.
Let $\Sigma$ be a two-dimensional surface. Any orientation-preserving diffeomorphism on $\Sigma$ is represented by an operator on the Hilbert space $\calH_\Sigma$ (it can be defined by the path integral).
Since the theory is topological, we thus obtain a (in general projective) representation of the \emph{mapping class group} (MCG) of $\Sigma$, which is defined as the group of isotopy classes of orientation-preserving diffeomorphisms.
In the case of the torus, $\Sigma=\TT^2$, the MCG is isomorphic to the modular group $SL(2,\ZZ)$, and it is generated by the so-called $S$ and $T$ transformations (which exchange the cycles of the torus and introduce Dehn twists, respectively).
For a general Chern-Simons theory, it is known how to compute their representation matrices in terms of representation theory and level~\cite[(2.23)]{marino05chern}.

Any closed connected orientable manifold $\calM$ can be obtained from the 3-sphere $\SS^3$ via Dehn surgery: cutting out a tubular neighborhood of a link $\calL$ and re-gluing solid tori with homeomorphisms applied along each cut.
The link $\calL = \calC_1\cup\dots\cup\calC_m$ is called a \emph{surgery link} and it carries an orientation and framing.
Accordingly, we obtain \emph{surgery rules}, which compute the expectation values~\eqref{eq:expectation value} for the manifold $\calM$ in terms of the 3-sphere $\SS^3$:
\begin{equation}\label{eq:surgery}
	\langle W(L,R) \rangle_{\calM} = \frac {\langle W(L,R) \tilde W(\calL) \rangle_{\SS^3}} {\langle \tilde W(\calL) \rangle_{\SS^3}}
\end{equation}
Here, $\tilde W(\calL) = \tilde W(\calC_1)\cdots\tilde W(\calC_m)$ is defined in terms operators $\tilde W(\calC) := \sum_\ell S_{\ell0} W(C,R_\ell)$.
Surgery will be useful to compute expectation values and thereby the states $\ket\calM$ that we are interested in classifying.
If the denominator in~\eqref{eq:surgery} is zero then we will consider the theory not to be well-defined on the manifold $\calM$.
We refer to~\cite{dijkgraaf1990topological,guadagnini1993link,marino05chern,freed2009remarks} for further detail on Chern-Simons theory.

\ssection{\texorpdfstring{$U(1)$ Theory}{U(1) Theory}}%
We now turn our attention to a simple abelian Chern-Simons theory, $G=U(1)$.
The irreducible representations of $G$ are labeled by integers $q\in\ZZ$.
Our goal in the following will be to connect the $U(1)$ theory to stabilizer states, which are most interesting in odd dimension.
We will thus consider odd level $k$.
In this case the Wilson loop operators $W(C,R_q)$ depend on the charge $q$ modulo $2k$.
However, there is a natural restriction to the even-valued charges which we use to obtain a $k$-dimensional theory \footnote{There is a subtlety here: $U(1)$ Chern-Simons theory is bosonic when the level is even but fermionic when the level is odd. To be precise, when $k$ is odd the charge $q=k$ object is a fermion which is ``local'' insofar as it has trivial braiding with all other charges. Because the theory contains a local fermion, it can only be defined on manifolds that admit a spin structure. For the torus, this requires picking either periodic or anti-periodic boundary conditions for fermions along the two cycles of the torus. Furthermore, modular transformations mix these boundary conditions, so it is important to restrict to a closed sector of the theory. One can also obtain \cref{U1STmatrix} from this perspective.}.
Specifically, we set $q_j=j+kj$, where $j=0,\dots,k-1$, so that the Hilbert space $\calH_{\TT^2}$ of the torus is spanned by basis states that we denote by $\{\ket j\}_{j=0}^{k-1}$.

The modular $S$ and $T$ transformations have the following unitary representations on $\HS_{\TT^2}$:
\begin{align}
\label{U1STmatrix}
S_{j j'}=\frac1{\sqrt k} \omega^{jj'}, \quad
T_{j j'}=\delta_{j,j'} \omega^{j(j+k)/2},
\end{align}
where $\omega=e^{2\pi i/k}$~\cite{wen1990topological}.
The $S$-matrix implements a discrete Fourier transform, i.e., a base change from the computational basis $\ket j$ into the conjugate basis.
The fusion tensor is given \nolinebreak by:
\begin{equation}\label{eq:U1 fusion}
N^{j_3}_{j_1j_2}=\delta_{j_3,j_1+j_2} =
\raisebox{-0.5cm}{\begin{tikzpicture}[scale=0.6, every node/.style={scale=0.6},inner sep=0pt, outer sep=0pt]
  \node [label=left:$j_1$](leg1) at (-0.707, 0){};
  \node [label=right:$j_2$](leg2) at (0.707, 0){};
  \node [label=above:$j_1+j_2$](leg3) at (0,1.707){};
  \node[vertex](vertex1) at (0,0.7){};
  \draw[->n=0.6] (leg1) -- (vertex1);
  \draw[->n=0.6] (leg2) -- (vertex1);
  \draw[->n=0.6] (vertex1) -- (leg3);
\end{tikzpicture}}
\end{equation}
This tensor comes from $U(1)$ charge conservation, and it has two useful interpretations.
When viewed as a map from two Hilbert spaces into one, it can be written as
$S N (S^\dagger\ot S^\dagger) \propto \sum_j \ketbra j {j,j}$.
Thus its adjoint, which is prepared by a path integral over the manifold in \cref{fig:fusetensor} but with opposite orientation, represents a ``copy'' operation in the basis $\{S^\dagger\ket j\}$, as shown in \cref{fig:Cplustensor},~(a).
Viewed as a tripartite quantum state, the same object corresponds to a genuinely tripartite entangled Greenberger-Horne-Zeilinger (GHZ) state~\cite{greenberger1989going}.
This already indicates that we can produce interesting states by path integration.

Now consider the manifold shown in \cref{fig:fusetensor} but with a solid torus carrying a Wilson loop in the representation $1$ inserted in one of the holes.
Path integration on this manifold results in the fusion tensor with one index fixed to the value $1$:
\begin{equation}\label{eq:PauliXtensor}
N^{j_2}_{j_1,1}=\delta_{j_2,j_1+1}=\raisebox{-0.5cm}{\begin{tikzpicture}[scale=0.6, every node/.style={scale=0.6},inner sep=0pt, outer sep=0pt]
  \node [label=left:$j_1$](leg1) at (-0.707, 0){};
  \node [label=right:$1$](leg2) at (0.707, 0){};
  \node [label=above:$j_1+1$](leg3) at (0,1.707){};
  \node[vertex] (vertex1) at (0,0.7){};
  \draw[->n=0.6] (leg1) -- (vertex1);
  \draw[->n=0.6] (leg2) -- (vertex1);
  \draw[->n=0.6] (vertex1) -- (leg3);
\end{tikzpicture}}
\end{equation}
This tensor, together with its conjugate version, allows us to generate the full single qudit Pauli group.
Recall that the \emph{Pauli group} of a $k$-dimensional system is generated by the operators $X$ and $Z$ defined by $X \ket j = \ket{j+1\pmod k}$ and $Z\ket j=\omega^j\ket j$.
As such, the tensor in~\eqref{eq:PauliXtensor} implements the ``shift operator'' $X$, and the phase operator is given by $Z = S X S^\dagger$.
Thus $Z$ is prepared by a path integral over the same manifold, but with boundary tori identified with $\TT^2$ in a different way.
For $n$ systems, the Pauli group is given by tensor products of generators, corresponding to the disjoint union of the corresponding 3-manifolds.
It follows that we can produce any element of the Pauli group for $n$ qudits by path integration.
From these examples, it is clear that we can produce interesting, non-trivial states; our goal now is to see how far we can go and characterize the set of all states we can prepare.

Before diving into our main results, let us briefly review the Clifford group and the related stabilizer states.
The \emph{Clifford group} is defined to be the normalizer of the Pauli group.
The \emph{stabilizer states} are the subset of the full $n$-qudit Hilbert space that arise as eigenstates of a maximal subgroup of the Pauli group.
Equivalently, they are the states that can be prepared using only Clifford gates applied to the fiducial state $\ket{0}^{\ox n}$.
The Gottesman-Knill theorem~\cite{gottesman1999heisenberg} asserts that any quantum computation involving only Clifford group elements applied to stabilizer states can be simulated efficiently on a classical computer.
As such, the Clifford group and the stabilizer states are a restricted class of unitaries and states that lack universal quantum computational power.
Generators of the Clifford group for odd $k$ are given by the Fourier transform $S$, the \emph{phase gate} $P \ket j = \omega^{j(j-1)/2} \ket j$ and the \emph{controlled addition} $\CADD \ket{j,\ell} = \ket{j,\ell+j}$~\cite{gottesman1999fault,farinholt2014ideal}.
We now state our main result for $U(1)$ Chern-Simons theory:

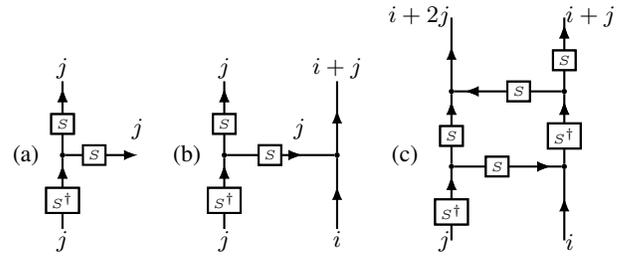
\begin{figure}
\centering
\raisebox{1.2cm}{(a)}
\begin{tikzpicture}[inner sep=0pt, outer sep=0pt]
  \node [label=below:$j$](leg1) at (0, 0) {};
  \node [label=above:$j$](leg3) at (0, 2) {};
  \node [label=above:$j$] at (1,1.2) {};
  \node [vertex](vertex1) at (0, 1) {};

  \draw [->-,gate=$S^\dagger$]   (leg1) -- (vertex1);
  \draw [->-,gate=$S$]  (vertex1) -- (leg3);
  \draw [->,gate=$S$]  (vertex1) -- (1,1);
\end{tikzpicture}
\quad
\raisebox{1.2cm}{(b)}
\begin{tikzpicture}[inner sep=0pt, outer sep=0pt]
  \node [label=below:$j$](leg1) at (0, 0) {};
  \node [label=below:$i$](leg2) at (1.5, 0) {};
  \node [label=above:$j$](leg3) at (0, 2) {};
  \node [label=above:$i+j$](leg4) at (1.5, 2) {};
  \node [label=above:$j$] at (1,1.2) {};
  \node [vertex](vertex1) at (0, 1) {};
  \node [vertex](vertex2) at (1.5, 1) {};

  \draw [->-,gate=$S^\dagger$]   (leg1) -- (vertex1);
  \draw [->-,gate=$S$]  (vertex1) -- (leg3);
  \draw [->n=0.6]     (leg2) -- (vertex2);
  \draw [->n=0.6]     (vertex2) -- (leg4);
  \draw [->n=0.7,gate=$S$]  (vertex1) -- (vertex2);
\end{tikzpicture}
\quad\raisebox{1.2cm}{(c)}\!\!\!\!\!\!\!\!\!
\begin{tikzpicture}[inner sep=0pt, outer sep=0pt]
  \node [label=left:$j$](leg1) at (0, 0) {};
  \node [label=right:$i$](leg2) at (1.5, 0) {};
  \node [label=left:$i+2j$](leg3) at (0, 3) {};
  \node [label=right:$i+j$](leg4) at (1.5, 3) {};
  \node [vertex](vertex1) at (0, 1) {};
  \node [vertex](vertex2) at (1.5, 1) {};
  \node [vertex](vertex3) at (0, 2) {};
  \node [vertex](vertex4) at (1.5, 2) {};

  \draw [->-,gate=$S^{\dagger}$]   (leg1) -- (vertex1);
  \draw [->-,gate=$S$]  (vertex1) -- (vertex3);
  \draw [->n=0.6]     (vertex3) -- (leg3);
  \draw [->n=0.6]     (leg2) -- (vertex2);
  \draw [->-,gate=$S^{\dagger}$] (vertex2) -- (vertex4);
  \draw [->-,gate=$S$]  (vertex4) -- (leg4);
  \draw [->-,gate=$S$]  (vertex1) -- (vertex2);
  \draw [->-,gate=$S$]  (vertex4) -- (vertex3);
\end{tikzpicture}
\caption{\label{fig:Cplustensor}
(a) The copy tensor can be prepared by conjugating the adjoint of the fusion tensor by Fourier transforms.
(b) We then obtain the controlled addition tensor $\CADD$ by contracting this copy tensor with a fusion tensor.
(c) We obtain a perfect tensor from two copies of the $\CADD$ tensor.}
\end{figure}

\begin{thm}\label{U1allstabThm}
For $U(1)$ Chern-Simons theory at odd prime level $k$, we can prepare any stabilizer state in the $n$-torus Hilbert space $\HS_{\TT^2}^{\ot n}=(\CC^k)^{\ot n}$.
Conversely, if $k\equiv1\pmod4$ then we can only prepare stabilizer states.
\end{thm}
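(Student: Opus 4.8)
The plan is to treat the two directions separately, in both cases reducing everything to the elementary objects already assembled above: the solid tori preparing the basis states $\ket j$, the modular generators $S$ and $T$ of~\eqref{U1STmatrix}, the fusion tensor~\eqref{eq:U1 fusion}, the Wilson-line insertion~\eqref{eq:PauliXtensor}, and the surgery rules~\eqref{eq:surgery}; throughout, ``prepare'' is understood up to the overall normalization we do not track. For the forward inclusion I would invoke the standard fact that, for prime $k$, every $n$-qudit stabilizer state has the form $C\ket0^{\ot n}$ for some Clifford $C$. It then suffices to prepare the fiducial state and to realize each Clifford generator $S$, $P$, $\CADD$ as a cobordism glued onto the boundary tori. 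The state $\ket0^{\ot n}$ is the disjoint union of $n$ solid tori carrying the trivial representation; the generator $S$ is the modular $S$-transformation, available through the mapping-class-group representation; and $\CADD$ is precisely the cobordism of \cref{fig:Cplustensor},~(b). The phase gate is obtained from $T$ by a Pauli correction: since $T_{jj}=\omega^{j(j+k)/2}$ and $P\ket j=\omega^{j(j-1)/2}\ket j$, one checks that $T=Z^{(k+1)/2}P$, hence $P=Z^{(k-1)/2}T$ with $Z=SXS^\dagger$ and $X$ the insertion~\eqref{eq:PauliXtensor}. Because the path integral is compatible with gluing, composing these cobordisms realizes $C\ket0^{\ot n}$ for an arbitrary Clifford $C$, so every stabilizer state is preparable.

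For the converse I would show that every preparable state already lies in the stabilizer class. By~\eqref{eq:M state} the amplitude $\braket{j_1,\dots,j_n}{\calM}$ is a Wilson-loop expectation value in the closed manifold $\overline\calM$ obtained by filling the boundary tori, and any inserted link is absorbed by fixing computational-basis indices. Presenting $\overline\calM$ by surgery on a link $\calL\subset\SS^3$ and applying~\eqref{eq:surgery} rewrites the amplitude as a sum over the surgery charges of a product of $S$- and $T$-matrix entries together with abelian linking phases. Since the $U(1)$ fusion rule~\eqref{eq:U1 fusion} is a delta function, $S$ is a discrete Fourier transform, and all self- and mutual-linking phases are quadratic in the charges, the resulting sum is a multivariate quadratic Gauss sum over $\ZZ_k$. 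The target is to show that its value has the stabilizer normal form: an affine-subspace support in $\vec j$ with relative amplitudes $\omega^{q(\vec j)}$ for some quadratic form $q$ over $\ZZ_k$.

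The main obstacle, and the single place where the hypothesis $k\equiv1\pmod4$ is used, is controlling the phases left behind when the Gauss sum is evaluated. Completing the square along the nondegenerate directions of the linking matrix isolates an overall quadratic Gauss sum whose sign is the classical factor $\epsilon_k$, equal to $1$ for $k\equiv1\pmod4$ and to $i$ for $k\equiv3\pmod4$ at odd $k$. For odd prime $k$ the number $i$ is not a power of $\omega=e^{2\pi i/k}$, so any factor of $i$ that attaches itself to the $\vec j$-dependent part of the amplitude would violate the stabilizer normal form, in which only $\omega$-phases may appear. When $k\equiv1\pmod4$ this danger is absent altogether: the surviving relative amplitudes are manifestly $\omega$-valued on an affine subspace, so the state is a stabilizer state. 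When $k\equiv3\pmod4$ the factors of $i$ cannot be controlled in general, which is exactly why the converse is asserted only in the former case. The two delicate steps are therefore the degenerate-direction analysis, where the Gauss sum collapses to the linear constraints that cut out the affine support, and the bookkeeping that verifies, for $k\equiv1\pmod4$, that every surviving phase is a power of $\omega$.
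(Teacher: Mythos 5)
Your forward direction is essentially the paper's own argument: the same generating set ($S$ from the modular representation, $X$ from~\eqref{eq:PauliXtensor}, $Z=SXS^\dagger$, $P=Z^{(k-1)/2}T$, and $\CADD$ from \cref{fig:Cplustensor}), composed by gluing and applied to $\ket 0^{\ot n}$; that half is correct. The converse, however, has a genuine flaw, and it sits exactly where you locate the hypothesis $k\equiv1\pmod 4$. You claim the hypothesis is needed to keep the Gauss-sum factor $\epsilon_k$ (equal to $i$ when $k\equiv3\pmod4$) from ``attaching to the $\vec j$-dependent part'' of the amplitude. It never can. Writing the amplitude as $\sum_{\vec\ell\in\ZZ_k^m}\omega^{Q(\vec\ell)+B(\vec\ell,\vec j)+Q'(\vec j)}$ and completing the square over $\ZZ_k$ (legitimate since $k$ is an odd prime, so $2$ and the nondegenerate coefficients are invertible), each nondegenerate direction contributes $\sum_{\ell}\omega^{a(\ell+c(\vec j))^2}$ with $c(\vec j)\in\ZZ_k$; this sum is invariant under the shift $\ell\mapsto\ell+c(\vec j)$ and hence equals the constant $\left(\tfrac{a}{k}\right)\epsilon_k\sqrt k$, independent of $\vec j$. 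Each degenerate direction contributes a delta function enforcing a linear constraint on $\vec j$. So for \emph{every} odd prime $k$, including $k\equiv3\pmod4$, the $\vec j$-dependence is an affine-subspace indicator times $\omega^{(\text{quadratic in }\vec j)}$, and all factors of $i$ are global prefactors irrelevant to the stabilizer property. By your own reasoning the converse would then hold for all odd primes; the obstruction you describe is a phantom, which signals that the hypothesis has been misplaced.

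Where $k\equiv1\pmod4$ is actually needed is one step earlier, at a point your proof takes for granted: that the amplitudes of $\ket\calM$ may be computed by choosing \emph{some} surgery presentation of $\overline\calM$ and applying~\eqref{eq:surgery}. For the abelian theory this is a nontrivial consistency statement -- the answer must be independent of the choice of surgery link, i.e.\ invariant under Kirby moves, and different presentations can a priori differ by signature-dependent Gauss-sum phases (this framing anomaly is the correct home for your $\epsilon_k$, but it is presentation-dependent, not $\vec j$-dependent). The paper does not evaluate the Gauss sum at all: it invokes the consistency result of~\cite{guadagnini1993link}, proved for prime levels $k\equiv1\pmod 4$, then uses the linking-number formula $\langle W(L,R)\rangle_{\SS^3}\propto\omega^{-\sum_{a,b}j_a\tilde L_{ab}j_b}$ together with the discrete Hudson theorem~\cite{gross2006hudson} to conclude that $\ket{\SS^3\setminus N}$ is a stabilizer state, and finally obtains $\ket\calM$ by contracting that state with $\ket+^{\ot m}$, a contraction of stabilizer states and hence again a stabilizer state. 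Your direct Gauss-sum evaluation is a perfectly good substitute for that last packaging step, but without the consistency input the very first move of your converse is unjustified, and the case distinction you build on $\epsilon_k$ does not repair it.
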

\begin{proof}
We will build the remaining Clifford generators $P$ and $\CADD$ explicitly.
Comparing the definition of the phase gate $P$ with~\eqref{U1STmatrix}, we see that $T$ has the same form as the phase gate except for a state-dependent phase $\omega^{j(k+1)/2}$.
We use the Pauli-$Z$ gate that we constructed earlier to acquire this phase.
Indeed, $P = Z^{(k-1)/2} T$, where we note that $(k-1)/2$ is an an integer for odd $k$.
To produce the two-qudit controlled addition $\CADD$, we contract a fusion tensor with a copy tensor, giving rise to a complicated manifold with a simple graphical representation, shown in \cref{fig:Cplustensor},~(b).
It follows from the above that we can prepare any Clifford operator $U$ on $n$ qubits by a path integral over some 3-manifold with $2n$ boundary tori, half of which carry the standard orientation (the outputs) and half of which carry the opposite orientation (the inputs).
If we glue solid standard tori to the latter, we obtain a manifold that produces an arbitrary stabilizer state $U \ket0^{\ot n}$.
This concludes the proof of the first statement.

We now prove the converse statement.
We will use the following important results:
The expectation value of any Wilson loop operator $W(L,R)$ in $\SS^3$ is fully characterized by the linking numbers of its components (the self-linking numbers are determined by the framing).
Specifically, for a framed link $L$ in $\SS^3$ labeled by irreducible representations $R_{j_1},\dots,R_{j_n}$, one has
\begin{align*}
  \langle W(L,R) \rangle_{\SS^3} \propto e^{-\frac{\pi i}k \sum_{a,b} q_{j_a} L_{ab} q_{j_b}} = \omega^{-\sum_{a,b} j_a \tilde L_{ab} j_b},
\end{align*}
where $L_{ab}\in\ZZ$ are the matrix elements of the linking matrix (e.g., \cite{marino05chern}); for the second equality we use that $q_j = j + k j$ and define $\tilde L_{ab} = \frac {1+k^2}2 L_{ab}$, which is again an integer matrix.
According to~\eqref{eq:M state}, the amplitudes of the state $\ket{\SS^3\setminus N}$, where $N$ denotes a tubular neighborhood of the link $L$, are given by the expectation values $\langle W(L,R) \rangle_{\SS^3}$.
On the other hand, the discrete Hudson's theorem~\cite{gross2006hudson} asserts that any state with such amplitudes is a stabilizer state.
Therefore, we conclude that $\ket{\SS^3\setminus N}$ is a stabilizer state.

We now consider an arbitrary manifold $\calM$ with $n$ torus boundaries.
First, we use~\eqref{eq:M state} to express the amplitudes of $\ket{\calM}$ by expectation values of Wilson loop operators in $\overline\calM$, the closed manifold obtained by filling in the boundary tori.
Second, we use the surgery rule~\eqref{eq:surgery} to reduce the latter to an expectation value in $\SS^3$.
The consistency of the surgery rule for $U(1)$ has been proved in~\cite{guadagnini1993link} for prime levels $k\equiv1\pmod4$.
The upshot is that
\begin{align*}
  &\braket{j_1,\dots,j_n|\calM}
\propto \langle W(C_1,R_{j_1}^*) \cdots W(C_n,R_{j_n}^*) \tilde W(\calL)\rangle_{\SS^3}.
\end{align*}
Here, $C_1,\dots,C_n$ denote the cores of the filled-in boundary tori and $\calL$ denotes a surgery link by which we obtain $\calM$ from $\SS^3$;
we have omitted the denominator in~\eqref{eq:surgery}, since it only contributes a global scalar.
Now recall that $\tilde W(\calC)=\sum_j S_{\ell 0} W(\calC, R_\ell)$.
If we denote by $N$ a tubular neighborhood of the link $L\cup\calL$ then it follows from the preceding discussion that
\[ \braket{j_1,\dots,j_n|\calM} \propto \braket{j_1,\dots,j_n,+,\dots,+| \SS^3 \setminus N}, \]
where we have introduced $\ket+ = S\ket 0$.
Thus $\ket{\calM}$ is obtained by contracting the stabilizer state $\ket{\SS^3\setminus N} \in (\CC^k)^{\ot (n+m)}$ with the stabilizer state $\ket+^{\ot m}$, with $m$ the number of components of $\calL$, and therefore again a stabilizer state (e.g., \cite[App.~G]{hayden2016holographic}).
Lastly, we had argued before that if we are given in addition a framed link $L'$ and irreducible representations $R'$ then the corresponding state $\ket{\calM;L',R'}$ can be obtained from a manifold without Wilson loops by projecting on a computational basis state.
By the same argument as before, this shows that $\ket{\calM;L',R'}$ is a stabilizer state.
\end{proof}

\begin{figure}[!t]
\centering
\begin{tikzpicture}[scale=0.6, every node/.style={scale=0.6},inner sep=0pt, outer sep=0pt, baseline=(current bounding box.center)]
  \node[scale=1.67] (label1) at (-2,0){(a)};
  \node[vertex,label=left:$j_1$](leg1) at (-0.707, 0){};
  \node(leg4) at (0, -1.707){};
  \node[vertex,label=right:$j_2$](leg2) at (0.707, 0){};
  \node[label=right:$\;j_1+j_2$](leg3) at (0,1.707){};
  \node[vertex] (vertex1) at (0,0.7){};
  \node[vertex] (vertex2) at (0,-0.7){};
  \draw[->n=0.7] (leg1) -- (vertex1);
  \draw[->n=0.7] (leg2) -- (vertex1);
  \draw[->n=0.8] (vertex2) -- (leg1);
  \draw[->n=0.8] (vertex2) -- (leg2);
  \draw[->n=0.6] (vertex1) -- (leg3);
  \draw[->n=0.7] (leg4) -- (vertex2);
  \draw [thick](0, 1.68) arc[radius=0.6, start angle= 0, end angle= 180];
  \draw [thick] (-1.2,1.68) -- (-1.2,-1.68);
  \draw [thick](0, -1.68) arc[radius=0.6, start angle= 360, end angle= 180];
  \begin{scope}[shift={(4.5,0)}]
  \node[scale=1.67] (label2) at (-2.5,0){(b)};
	\foreach \i in {0,...,2}
	{
 		 \begin{scope}[shift={(\i*2.5,0)}, yscale=(-1)^\i]
                  \node[vertex](vertex1) at (0, -0.5){};
                  \node[vertex] (vertex2) at (0,0.5){};
                  \node (leg1) at (-0.707, -1.207){};
                  \node (leg2) at (0.707, -1.207){};
                  \node (leg3) at (-0.707,1.207){};
                  \node (leg4) at (0.707,1.207){};
                  \draw[->n=0.7] (leg1) -- (vertex1);
                  \draw[->n=0.7] (leg2) -- (vertex1);
                  \draw[->n=0.8] (vertex1) -- (vertex2);
                  \draw[->n=0.8] (vertex2) -- (leg3);
                  \draw[->n=0.8] (vertex2) -- (leg4);
                  \draw [thick](0.68,1.18) arc[radius=0.806, start angle= 135, end angle= 45];
                  \draw [thick](0.68,-1.18) arc[radius=0.806, start angle= 225, end angle= 315];
                  \end{scope}
  	}
  \draw [thick](-0.68,1.18) arc[radius=0.806, start angle= 45, end angle= 155];
  \draw [thick](-0.68,-1.18) arc[radius=0.806, start angle= 315, end angle= 225];
  \draw (6.650,1.14) -- (6.85,1.34);
  \draw (6.592,1.18) -- (6.792,1.38);
  \draw (-1.650,-1.14) -- (-1.85,-1.34);
  \draw (-1.592,-1.18) -- (-1.792,-1.38);

  \draw (6.650,-1.14) -- (6.85,-1.34);
  \draw (-1.80,0.95) -- (-2.0,1.15);

  \node (j1) at (-1.2,1.2) {$j_1$};
  \node (j2) at (1.3,1.2) {$j_2$};
  \node (j3) at (3.8,1.2) {$j_3$};
  \node (j4) at (6.2,1.2) {$j_4$};
  \node at (0.7,0) {$j_1+j_2$};
  \node at (3.2,0) {$j_2+j_3$};
  \node at (5.7,0) {$j_3+j_4$};
  \node at (-1.2,-1.7) {$j_4$};
  \node at (1.3,-1.7) {$j_1+j_2-j_4$};
  \node at (3.8,-1.7) {$j_3+j_4-j_1$};
  \node at (6.2,-1.7) {$j_1$};

\end{scope}
\end{tikzpicture}
\caption{\label{fig:ghz} Computation of (a) $Z(-\calN \cup_{\partial\calN} \calN)=k^2$ and (b) $Z(-3\calN \cup_f 3\calN)=k^4$ for $U(1)$ Chern-Simons theory, where $\calN$ is the manifold from \cref{fig:fusetensor} that induces the fusion tensor~\eqref{eq:U1 fusion}.
Similarly, $Z(-2\calN \cup_{f_A} 2\calN) = k^3$, and hence $S(A)=S(B)=S(C)=\log k$.
As a consequence, we find from~\eqref{eq:tripartite} that the fusion tensor is equivalent to $g = 3 - 2 = 1$ GHZ state.
}
\end{figure}
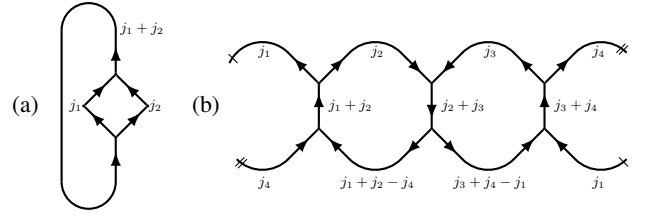

\Cref{U1allstabThm} shows that the path integral of $U(1)$ Chern-Simons theory can be used to prepare arbitrary stabilizer states in the many-torus Hilbert space.
Using the Choi-Jamiolkowski isomorphism, we can also prepare an arbitrary element of the Clifford group.
As a consequence, we can prepare a large class of interesting quantum error correcting codes.
An example is $\sum_{i,j} \ketbra{i+j,i+2j}{i,j}$, which is a perfect tensor, whose construction we show in \cref{fig:Cplustensor},~(c).
Recall that a tensor with an even number of legs is \emph{perfect} if, for any bipartition of the legs, the tensor defines a unitary map from one half of the legs to the other.
Perfect tensors arise in the construction of toy models of holography~\cite{pastawski2015holographic,hayden2016holographic}.

Stabilizer states are in many ways the finite-dimensional analog of Gaussian states~\cite{gross2006hudson}, so it is conceptually quite pleasing that they can be obtained from $U(1)$ Chern-Simons theory, for which the action~\eqref{eq:cs action} is quadratic.
It is also quite intriguing that \cref{U1allstabThm} allows us to parametrize the stabilizer states by topology (though redundantly so), which is rather different from the usual description in terms of stabilizer generators or discrete phase space.

\Cref{U1allstabThm} also has a number of concrete consequences for the entanglement properties of states prepared by the path integral in $U(1)$ Chern-Simons theory.
First, stabilizer states have flat entanglement spectrum.
This means that we can compute the entanglement entropy $S(A)$ of an arbitrary many-torus subsystem $A\subseteq\partial\calM$ by a introducing only a single replica, rather than having to perform an analytic continuation as in the general case (cf.~\cite{dong2008topological,flammia2009topological}).
Explicitly,
\begin{equation}\label{eq:bipartite}
	S(A) 
	= -\log \frac {Z(-2\calM \cup_{f_A}\! 2\calM)} {Z(-\calM \cup_{\partial\calM} \calM)^2},
\end{equation}
where we write $\rho_{\partial\calM} = \ketbra\calM\calM$ for the unnormalized density matrix and $\rho_A = \tr_{\partial\calM\setminus A}(\rho)$ for the reduced density matrix; we denote the path integral over the closed manifold $\calN$ (i.e., \eqref{eq:expectation value} without any operator insertion) by $Z(\calN)$, two copies of the manifold that prepares the pure state by $2\calM = \calM \cup \calM$, and by $f_A$ the diffeomorphism that exchanges the two copies of the tori in $A$, while leaving the tori in $\partial\calM\setminus A$ fixed.
In the presence of Wilson loops, the numerator and denominator of~\eqref{eq:bipartite} include the corresponding Wilson loop operators.

Second, going beyond entanglement entropies, the tripartite entanglement in stabilizer states is well-understood: Any pure tripartite stabilizer state is equivalent to a collection of bipartite Bell pairs and tripartite GHZ states~\cite{bravyi2006ghz,looi2011tripartite}.
Let $\partial\calM = A\cup B\cup C$ denote an arbitrary tripartiton of the boundary tori.
Then the amount of tripartite entanglement, i.e., the number of GHZ states that can be distilled by local unitaries between $A$, $B$ and $C$, is given by the following formula~\cite[(5)]{nezami2016multipartite}:
\begin{equation}\label{eq:tripartite}
	g = \frac {S(A) + S(B) + S(C)} {\log k} + \log_k \frac {Z(-3\calM \cup_f 3\calM)} {Z(-\calM \cup_{\partial\calM} \calM)^3}
\end{equation}
Here, $f$ denotes the diffeomorphism that cyclically exchanges the three copies of the tori in $A$ as $1\mapsto2\mapsto3$, the three copies of the tori in $B$ as $1\mapsto3\mapsto2$, and which leaves the tori in $C$ invariant.
Since the entanglement entropies can be evaluated using~\eqref{eq:bipartite}, \cref{eq:tripartite} is a fully topological formula for evaluating the tripartite entanglement of many-torus states prepared by the path integral in $U(1)$ Chern-Simons theory.
As an example, we can use \cref{eq:tripartite} to verify that the fusion tensor~\eqref{eq:U1 fusion} is indeed equivalent to a GHZ state (see \cref{fig:ghz}).

\ssection{\texorpdfstring{$SO(3)$ Theory}{SO(3) Theory}}%
\begin{figure}
\centering
\raisebox{0.7cm}{(a)} \includegraphics[width=0.75\columnwidth]{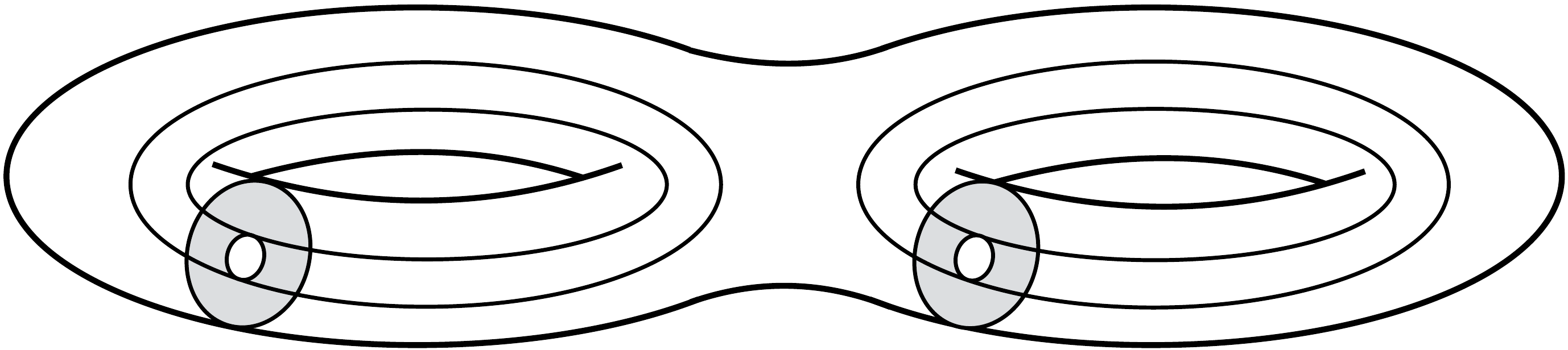} \\[0.3cm]
\raisebox{0.6cm}{(b)} \includegraphics[width=0.85\columnwidth]{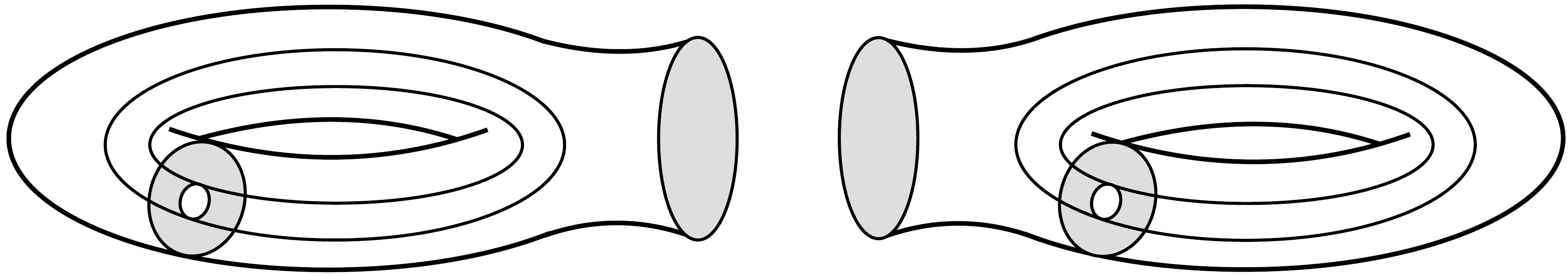}
\caption{\label{fig:2g1tog2}
(a) A genus-2 handlebody with two solid tori removed.
The boundary of this manifold $\calV$ consists of a genus-2 surface $\Sigma_2$ and two tori, and it induces by path integration an isometry $\HS_{\TT^2}^{\ot 2}\to\HS_{\Sigma_2}$.
(b) The same manifold, cut into two pieces along an embedded disk.}
\end{figure}
Lastly, we consider $G=SO(3)$ Chern-Simons theory at level $k=r+3$, when $r\geq5$ is an odd prime.
Larsen and Wang have showed that in this case the mapping class group of any surface $\Sigma_g$ of genus $g\geq2$ is densely represented in the projective unitary group of $\calH_{\Sigma_g}$~\cite{larsen2005density}.
We will now show that this implies that any many-torus state can be approximated to arbitrary precision:

\begin{thm}
	For $SO(3)$ Chern-Simons theory at level $k\geq8$, with $k-3$ an odd prime, we can approximate any state in the $n$-torus Hilbert space to arbitrary precision by path integration.
\end{thm}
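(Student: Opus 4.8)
The plan is to combine the density theorem of Larsen and Wang with an isometric embedding of the $n$-torus Hilbert space into the state space of a higher-genus surface, where that theorem applies. First I would construct, generalizing the manifold $\calV$ of \cref{fig:2g1tog2}, a $3$-manifold realizing an isometry $V\colon \calH_{\TT^2}^{\ot n}\to\calH_{\Sigma_g}$ by path integration, for instance a handlebody of genus $g\geq 2$ with $n$ disjoint solid tori removed from its interior. The genus $g$ is chosen large enough that $\dim\calH_{\Sigma_g}\geq(\dim\calH_{\TT^2})^n$ (and $g\geq 2$), so that an isometric rather than merely contractive embedding is available; that $V$ is genuinely isometric, i.e.\ $V^\dagger V=\mathrm{id}$, one verifies exactly as for $\calV$ by gluing the block to its orientation reverse along $\Sigma_g$, the computation indicated in \cref{fig:2g1tog2}(b). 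The orientation-reversed block then realizes the adjoint coisometry $V^\dagger\colon\calH_{\Sigma_g}\to\calH_{\TT^2}^{\ot n}$, which has operator norm $\lVert V^\dagger\rVert\leq 1$.

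Next I would fix a fiducial state $\ket\Omega=\ket{H_g}\in\calH_{\Sigma_g}$ prepared by the genus-$g$ handlebody and invoke Larsen-Wang: the path integral represents the mapping class group of $\Sigma_g$ densely in the projective unitary group $PU(\calH_{\Sigma_g})$, and every group element is realized by a path integral over the corresponding mapping cylinder. Since $PU(\calH_{\Sigma_g})$ acts transitively on the projective space $\mathbb P(\calH_{\Sigma_g})$ with continuous, open orbit map, the image of a dense subgroup is dense, so the orbit of the nonzero vector $\ket\Omega$ is dense in $\mathbb P(\calH_{\Sigma_g})$. Hence for any target $\ket\psi\in\calH_{\TT^2}^{\ot n}$ and any $\varepsilon>0$ there is a mapping class group element $\phi$ with $\lVert U_\phi\ket\Omega-V\ket\psi\rVert<\varepsilon$ up to an irrelevant global phase, where I normalize $\ket\Omega$ so that $\lVert U_\phi\ket\Omega\rVert=\lVert V\ket\psi\rVert=\lVert\psi\rVert$, using that $V$ is isometric.

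Finally I would apply the coisometry. The full manifold is assembled by gluing the fiducial handlebody $H_g$, the mapping cylinder of $\phi$, and the orientation-reversed block realizing $V^\dagger$, all along genus-$g$ surfaces, yielding a single $3$-manifold with $n$ torus boundaries whose state is $V^\dagger U_\phi\ket\Omega$. Because $\lVert V^\dagger\rVert\leq 1$ and $V^\dagger V=\mathrm{id}$, we obtain $\lVert V^\dagger U_\phi\ket\Omega-\ket\psi\rVert=\lVert V^\dagger(U_\phi\ket\Omega-V\ket\psi)\rVert\leq\varepsilon$, so $\ket\psi$ is approximated to arbitrary precision by path integration, as claimed.

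The hard part is not the density input, which is quoted, but the two structural points that make it usable. The density theorem fails for the torus itself, whose mapping class group is only $SL(2,\ZZ)$, so one must genuinely push into genus $g\geq 2$ and verify that the gluing produces $V^\dagger V=\mathrm{id}$ rather than a strict contraction; this is where the $\calV$-type computation does the real work. The second point is the passage from density in $PU(\calH_{\Sigma_g})$ to density of the orbit of a single fiducial vector, which relies on transitivity of the projective action but must be stated carefully since the representation is only projective. Both are manageable, but they carry the essential content of the argument.
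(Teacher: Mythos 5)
Your proposal is correct in its essentials and rests on the same two pillars as the paper's proof---the Larsen--Wang density theorem and a path-integral isometry into a higher-genus Hilbert space, verified by the doubling computation of \cref{fig:2g1tog2}(b)---but it assembles them in a genuinely different way. The paper only ever embeds the \emph{two}-torus Hilbert space into $\HS_{\Sigma_2}$; it extends an arbitrary two-torus unitary $U$ to $W=VUV^\dagger+(I-VV^\dagger)$ on $\HS_{\Sigma_2}$, approximates $W$ by a mapping class group element $W_f$, and realizes $V^\dagger W_f V$ by path integration over $-\calV\cup_f\calV$. This yields approximate universality of two-torus \emph{unitaries}; the passage to arbitrary $n$-torus \emph{states} is then the standard circuit-composition argument (arbitrary two-qudit gates form a universal set, applied to $\ket{0}^{\ot n}$ prepared by solid tori), which the paper leaves implicit. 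You instead embed the full $n$-torus Hilbert space into $\HS_{\Sigma_g}$ at once, use transitivity of $PU(\HS_{\Sigma_g})$ on projective space to upgrade density of the represented mapping class group to density of the orbit of the handlebody state $\ket\Omega$, and then apply a single coisometry. This is a one-shot preparation that never invokes gate universality or composition of gluings, at the price of requiring Larsen--Wang at genus $g$ rather than just genus $2$ (their theorem does hold for all $g\geq2$, as the paper notes). The paper's route buys the stronger intermediate statement of unitary universality on $\HS_{\TT^2}^{\ot 2}$; yours is the more economical and self-contained path to the stated theorem.

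Two points should be tightened. First, choosing $g$ by the dimension count $\dim\HS_{\Sigma_g}\geq(\dim\HS_{\TT^2})^n$ is necessary but is not what makes $V^\dagger V\propto\id$: that comes from the geometry of the block. You should take $g=\max(n,2)$ with one removed solid torus threading each handle, so that the double $-\calV\cup_{\Sigma_g}\calV$ becomes a connected sum of $g$ copies of $\SS^2\times\SS^1$ in which the $a$-th pair of Wilson lines $R_{i_a},R_{k_a}^*$ sits in its own summand, and the amplitude factorizes as $\prod_a N_{i_a,0}^{k_a}=\prod_a \delta_{i_a,k_a}$; solid tori removed from a ball inside the handlebody (contractible cores) would instead give unlinked unknots and hence a rank-one operator, not the identity. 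Second, your orbit argument needs the fiducial vector to be nonzero, which holds since $\ket{H_g}$ has norm squared proportional to $Z$ of the doubled handlebody, a positive number; and since both the mapping class group representation and the path-integral maps are defined only up to scale, the approximation statement should be read projectively, as you indicate.
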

\begin{proof}
By the Verlinde formula~\cite{verlinde1988fusion}, $\dim \HS_{\TT^2}^{\ot 2} \leq \dim \HS_{\Sigma_2}$, where $\Sigma_2$ denotes a surface of genus 2. Indeed, $\dim \HS_{\Sigma_2} = \sum_{i,j,\ell} \left( N_{ij}^\ell \right)^2$ and for each $i,j$ there is always at least one $\ell$ with $N_{ij}^\ell > 0$, so $\dim \HS_{\Sigma_2} \geq \sum_{i,j} 1 = \dim \HS_{\TT^2}^{\ot 2}$. This suggests that we can inherit the density result of~\cite{larsen2005density} by constructing an isometry from $\HS_{\TT^2}^{\ox 2}\to\HS_{\Sigma_2}$ using path integration (cf.\ \cite{barkeshli2016modular}).
We claim that the map $V$ induced by the manifold $\calV$ shown in \cref{fig:2g1tog2},~(a) is proportional to such an isometry.
Thus we need to show that $V^\dagger V \propto \id$.
Equivalently, we would like to show that the map corresponding to $-\calV \cup_{\Sigma_2} \calV$ is proportional to the identity map.

To see this, observe that, according to~\eqref{eq:M state}, $\braket{k,l|V^\dagger V|i,j}$ can be computed as an expectation value for the manifold $\overline{-\calV \cup_{\Sigma_2} \calV}$ formed by filling the inner, toroidal holes of the two copies of $\calV$ with solid tori containing Wilson loop operators in the desired representations, and then gluing along the genus-2 surface.
The very same manifold can be obtained as the connected sum of two copies of $\SS^2 \times \SS^1$.
Indeed, imagine that we first chop each $\calV$, as shown in \cref{fig:2g1tog2},~(b), then glue along the boundaries of the similarly-cut $\Sigma_2$'s, and at last along the two holes which have now turned into $\SS^2$'s.
This gives:
\begin{align*}
&\quad\braket{k,l|V^\dagger V|i,j} \\
\propto\;&Z(\overline{-\calV \cup_{\Sigma_2} \calV}; R_i,R_j,R_k^*,R_l^*) \\
=\;&Z(\SS^2 \times \SS^1 \# \SS^2 \times \SS^1; R_i,R_j,R_k^*,R_l^*) \\
\propto\;&Z(\SS^2 \times \SS^1; R_i,R_k^*) \, Z(\SS^2 \times \SS^1; R_j,R_l^*) \\
\propto\;&N_{i,0}^k N_{j,0}^l = \delta_{i,k} \delta_{j,l},
\end{align*}
where we omit the links in $Z(\dots)$ for notational simplicity;
the third line is the calculational rule~\cite[(4.1)]{witten89jones} for the connected sum $\#$,
and the fourth line follows from~\eqref{eq:fusion} together with the fact that $N_{i,0}^k = \delta_{i,k}$, since fusion with a trivial charge has no effect.
Thus we have proved that $V^\dagger V$ is proportional to an isometry.

Using this isometry, we can inherit the density result of~\cite{larsen2005density}.
Indeed, let $U$ be a unitary operator on $\HS_{\TT^2}^{\ot2}$.
Then $W = V U V^\dagger + (I - V V^\dagger)$ is a unitary on $\HS_{\Sigma_2}$.
According to~\cite{larsen2005density}, the mapping class group is represented densely in the projective unitary group of $\HS_{\Sigma_2}$.
Thus we find an orientation-preserving diffeomorphism $f$ represented by an operator proportional to some unitary $W_f$ on $\HS_{\Sigma_2}$ that approximates $W$ to arbitrary precision.
But then $U$ is approximated by $V^\dagger W_f V$ to the same precision, and the latter can be obtained by path integration over $-\calV \cup_f \calV$.
We conclude that any unitary on the two-torus Hilbert space can be approximated by path integration (up to overall rescaling).
\end{proof}

\ssection{Conclusions}%
In this work we showed how Euclidean path integrals can be used to generate interesting states like perfect tensors in simple topological field theories. The appearance of stablizer states in the $U(1)$ case allowed us to write topological formulas for the amount of tripartite entanglement in a state prepared by path integral. Furthermore, our results give a new way to encode stabilizer states in terms of topology using Chern-Simons theory. These results can clearly be extended to more general Chern-Simons theories, including theories with more $U(1)$ factors called $K$-matrix theories and more general non-Abelian theories. Our methods can also be adapted to study more general topological field theories in a variety of dimensions.

Because these field theories exhibit deconfined gauge fields, they exhibit some microscopic similarities with deconfined gauge theories having holographic duals. In particular, it will be interesting to extend our results to $SU(N)$ level $k$ Chern-Simons theories, especially in the limit of large $N$ and $k$. It will also be interesting to investigate the questions here posed in the context of free field theories, and eventually to combine the two ingredients to move towards non-trivial interacting conformal field theories, some of which like the ABJM model even have holographic duals~\cite{aharony2008N}.

\ssection{Acknowledgments}
We thank Shawn Cui, Patrick Hayden, Sepehr Nezami, \DJ{}or\dj{}e Radi\v{c}evi\'c, and Beni Yoshida for helpful discussions.
GS appreciates support from an NSERC postgraduate scholarship.
BS is supported by the Simons Foundation and Harvard University.
MW gratefully acknowledges support from the Simons Foundation and AFOSR grant FA9550-16-1-0082. 

After this work was completed, we learned that V.~Balasubramanian, J.~Fliss, R.~Leigh, and O.~Parrikar have some related results on the entanglement of multi-torus states in U(1) Chern-Simons theory.

\bibliography{references}

\end{document}